\title{
    Proofs about Network Communication:\\
    For Humans and Machines
}
\author{
    Wolfgang Jeltsch
    \institute{Well-Typed\\London, England}
    \email{wolfgang@well-typed.com}
\and
    Javier Díaz
    \institute{Atix Labs (a Globant Division)\\Buenos Aires, Argentina}
    \email{javier.diaz@globant.com}
}
\newtheorem{definition}{Definition}
\newtheorem{example}{Example}
\newtheorem{proposition}{Proposition}
\newtheorem{lemma}{Lemma}
\newtheorem{corollary}{Corollary}
\lstdefinelanguage{isabelle}{
    keywords={
        lemma,
        shows,
        proof,qed,
        next,
        case,
        then,moreover,ultimately,
        from,with,
        have,show,obtain,
        where,
        and,for
    }
}
\newcommand{\conj}{\mathrel\wedge}
\newcommand{\uniquant}[1]{\forall#1.\:}
\newcommand{\app}{\,}
\newcommand{\bnfdef}{\mathrel{{\mathop:}{\mathop:}{=}}}
\newcommand{\deadlock}{\mathbf{0}} 
\newcommand{\send}{\lhd}
\newcommand{\receive}[1]{\rhd#1.\:}
\newcommand{\newchannel}[1]{\nu#1.\:}
\newcommand{\repeatedreceive}[1]{\rhd^{\infty}#1.\:}
\newcommand{\distributor}{\mathbin\Rightarrow}
\newcommand{\unidirectionalbridge}{\mathbin\rightarrow}
\newcommand{\bidirectionalbridge}{\mathbin\leftrightarrow}
\newcommand{\unreliability}[1]{\text{\textup{\textcurrency}}^{#1}}
\newcommand{\loser}{\unreliability{?}}
\newcommand{\duplicator}{\unreliability{+}}
\newcommand{\duploser}{\unreliability{*}}
\newcommand{\sending}{\lhd}
\newcommand{\receiving}{\rhd}
\newcommand{\trans}[1]{\xrightarrow{\,#1\,}}
\newenvironment{rulesinfigure}
    {\begin{gathered}}
    {\vspace{0.5\topsep}\end{gathered}}
\newcommand{\introrule}[3]{\frac{#1}{#2}\quad(#3)}
\newcommand{\nextrule}{\qquad\qquad}
\newcommand{\nextline}{\\[\topsep]}
\newcommand{\newrulename}[2]
    {\expandafter\newcommand\csname rule#1\endcsname{{#2}}}
\newcommand{\sendingrule}
    {
        \introrule
            {}
            {a \send x \trans{a \sending x} \deadlock}
            {\rulesending}
    }
\newcommand{\parallelandnewchannelrules}
    {
        \introrule
            {
                p \trans{a \sending x} p'
                \quad
                q \trans{a \receiving x} q'
            }
            {p \parallel q \trans{\tau} p' \parallel q'}
            {\rulecommunicationltr}
        \nextrule
        \introrule
            {
                p \trans{a \receiving x} p'
                \quad
                q \trans{a \sending x} q'
            }
            {p \parallel q \trans{\tau} p' \parallel q'}
            {\rulecommunicationrtl}
        \nextline
        \introrule
            {p \trans{\alpha} p'}
            {p \parallel q \trans{\alpha} p' \parallel q}
            {\ruleparallelleft}
        \nextrule
        \introrule
            {q \trans{\alpha} q'}
            {p \parallel q \trans{\alpha} p \parallel q'}
            {\ruleparallelright}
        \nextrule
        \introrule
            {\uniquant{a} P \app a \trans{\alpha} Q \app a}
            {
                \newchannel{a} P \app a
                \trans{\alpha}
                \newchannel{a} Q \app a
            }
            {\rulenewchannel}
    }
\newcommand{\statementref}[1]{(\ref*{#1})}
\newcommand{\chain}{\mathbin\frown}
\newcommand{\constantbisimilarity}{[\sim]}
\newcommand{\mutantlifting}{\mathcal{M}}
\newcommand{\IO}[4]
    {\mathit{IO} \app #1 \app #2 \app #3 \app #4}
\newcommand{\postreceive}[3]
    {\mathit{post\_receive} \app #1 \app #2 \app #3}
\newcommand{\suffixadapted}
    {\mathbin\text{\guillemetright} \mathit{suffix} \app}
\newcommand{\realreceiving}[1]{\receiving_{#1}}
\newcommand{\nobinarg}{{}\mskip-\medmuskip}
\begin{document}

\maketitle

\begin{abstract}

Many concurrent and distributed systems are safety-critical and
therefore have to provide a high degree of assurance. Important
properties of such systems are frequently proved on the specification
level, but implementations typically deviate from specifications for
practical reasons. Machine-checked proofs of bisimilarity statements are
often useful for guaranteeing that properties of specifications carry
over to implementations. In this paper, we present a way of conducting
such proofs with a focus on network communication. The proofs resulting
from our approach are not just machine-checked but also intelligible for
humans.

\end{abstract}

\section{Introduction}

\label{introduction}

Concurrent and distributed systems are difficult to design and
implement, and their correctness is hard to ensure. However, many such
systems are safety-critical and therefore have to provide a high degree
of assurance. Machine-checked proofs can greatly help to meet this
demand. A particular application area of them is the verification of
design refinements. A specification may undergo a series of refinement
steps to account for practical limitations, ultimately resulting in an
implementation. Proving that these refinement steps preserve important
properties of the system is vital for assuring the implementation’s
correctness.

Our current research program focuses on applying design refinement
verification to the blockchain consensus protocols of the Ouroboros
family~\cite{badertscher:2018,david:2018,kiayias:2017}. For conducting
machine-checked proofs, we use the Isabelle proof assistant together
with a custom process calculus, called the Þ-calculus. As a first step,
we have proved~\cite{jeltsch:2022} that direct broadcast, which the
protocol specifications assume as the means of data distribution, is
behaviorally equivalent to broadcast via multicast, which
implementations of the protocols use. For our proof, we have used a
domain-specific language for describing network communication, which is
embedded in the Þ-calculus.

A weakness of this existing broadcast equivalence proof is that it is
not grounded in a formal semantics of the communication language but
based on the assumption that certain lower-level bisimilarity statements
hold. In this paper, we present a way of proving such bisimilarity
statements such that the resulting proofs are machine-checked and
intelligible. Concretely, we make the following contributions:
\begin{itemize}

\item

We present a transition system semantics for the Þ-calculus and derive a
transition system semantics for the communication language from it.

\item

We walk in detail through the proof of a lemma from which several
fundamental bisimilarity statements about communication language
processes can be derived. The proof of this lemma exemplifies a general
way of conducting bisimulation proofs in a concise and human-friendly
yet machine-checked fashion. Central to this approach is the combination
of the Isabelle/Isar proof language, a formalized algebra of “up to”
methods, Isabelle’s \lstinline|coinduction| proof method, and
higher-order abstract syntax.

\end{itemize}

The formal broadcast equivalence proof and its prerequisites can be
obtained from the following sources:
\begin{itemize}

\item

\url{https://github.com/input-output-hk/equivalence-reasoner}

\item

\url{https://github.com/input-output-hk/transition-systems}

\item

\url{https://github.com/input-output-hk/thorn-calculus}

\item

\url{https://github.com/input-output-hk/network-equivalences}

\end{itemize}

\section{The Þ-Calculus}

The Þ-calculus (pronounced “thorn calculus”) is a general-purpose
process calculus, which we have devised as a tool for convenient
development of machine-checked proofs about concurrent and distributed
systems. Our language for describing communication networks is embedded
in the Þ-calculus, so that we can leverage the strengths of the
Þ-calculus in our network-related specifications and proofs. The
Þ-calculus in turn is embedded in Isabelle/HOL. We use higher-order
abstract syntax (HOAS) for this embedding, since this allows us to have
the object language (the Þ-calculus) only deal with the key features of
process calculi, which are concurrency and communication, while shifting
the treatment of local names, data, computation, conditional execution,
and repetition to the meta-language (Isabelle/HOL).

The Þ-calculus strongly resembles the asynchronous
$\pi$-calculus~\cite{honda:1991}. Processes communicate via asynchronous
channels, which can be global or created locally. Channels are
first-class and can therefore be transmitted through other channels,
thus making them visible outside their original scopes. This is the
mobility feature pioneered by the (synchronous) $\pi$-calculus. However,
mobility plays only a marginal role in this paper, since it is not
exploited by the communication language.

\begin{definition}[Syntax of Þ-calculus processes]

The syntax of Þ-calculus processes is given by the following BNF rule,
where $a$~denotes channels, $x$~denotes values, $p$~and~$q$ denote
processes, and $P$~denotes functions from channels or from values to
processes, depending on the context:\footnote{Note that we write
function applications as mere juxtapositions of the respective functions
and their arguments, which is in line with Isabelle notation.}
\begin{equation*}
\mathit{Process} \bnfdef
    \deadlock               \mid
    a \send x               \mid
    a \receive{x} P \app x  \mid
    p \parallel q           \mid
    \newchannel{a} P \app a
\end{equation*}
The processes generated by the different alternatives of this BNF rule
are called the stop process, senders, receivers, parallel compositions,
and restrictions, respectively. Parallel composition is
right-associative and has lowest precedence; the other constructs have
highest precedence.

\end{definition}

Informally, the semantics of the Þ-calculus is characterized by the
following behavior of processes:
\begin{itemize}

\item

The stop process~$\deadlock$ does nothing.

\item

A sender $a \send x$ sends value~$x$ to channel~$a$.

\item

A receiver $a \receive{x} P \app x$ receives a value~$x$ from
channel~$a$ and continues like~$P \app x$.

\item

A parallel composition $p \parallel q$ performs $p$~and~$q$ in parallel.

\item

A restriction $\newchannel{a} P \app a$ introduces a local channel~$a$
and behaves like~$P \app a$.

\end{itemize}

Formally, the semantics is defined as a labeled transition system. Since
mobility is not essential to the topics of this paper and is at the same
time complex to handle, we present only a simplified version of the
semantics that ignores mobility.\footnote{We refer the reader to the
accompanying Isabelle code for the full semantics.}

\begin{definition}[Syntax of Þ-calculus actions]

The syntax of Þ-calculus actions is given by the following BNF rule,
where $a$~denotes channels and $x$~denotes values:
\begin{equation*}
\mathit{Action} \bnfdef
    a \sending x   \mid
    a \receiving x \mid
    \tau
\end{equation*}
The actions generated by the different alternatives of this BNF rule are
called sending actions, receiving actions, and the internal-transfer
action, respectively.

\end{definition}

The intuitive meanings of the different actions are as follows:
\begin{itemize}

\item

A sending action $a \sending x$ means sending value~$x$ to channel~$a$.

\item

A receiving action $a \receiving x$ means receiving value~$x$ from
channel~$a$.

\item

The internal-transfer action~$\tau$ means transferring some value
through some channel.

\end{itemize}

\begin{definition}[Semantics of the Þ-calculus]

The semantics of the Þ-calculus is given by the transition relation
${\trans{}} \subseteq \mathit{Process} \times \mathit{Action} \times
\mathit{Process}$ that is defined by the introduction rules in
\autoref{thorn-calculus-no-mobility-transition-rules}.%
\begin{figure}

\centering
\begin{math}
\begin{rulesinfigure}
\sendingrule
\nextrule
\introrule
    {}
    {a \receive{x} P \app x \trans{a \receiving x} P \app x}
    {\rulereceiving}
\nextline
\parallelandnewchannelrules
\end{rulesinfigure}
\end{math}

\caption{No-mobility versions of the transition rules of the Þ-calculus}

\label{thorn-calculus-no-mobility-transition-rules}

\end{figure}

\end{definition}

As usual, $p \trans{\alpha} q$ intuitively means that process~$p$ can
perform action~$\alpha$ and then continue like process~$q$.

\begin{definition}[Strong and weak bisimilarity of Þ-calculus processes]

The relations ${\sim} \subseteq \mathit{Process} \times
\mathit{Process}$ and ${\approx} \subseteq \mathit{Process} \times
\mathit{Process}$ denote strong and weak bisimilarity derived
from~$\trans{}$ in the usual way.

\end{definition}

Strong and weak bisimilarity possess various properties common for
process calculi, for which proofs can be found in the accompanying
Isabelle code.

\begin{lemma}[Inclusion of strong bisimilarity in weak bisimilarity]

\label{strong-weak-bisimilarity-inclusion}

Strongly bisimilar processes are also weakly bisimilar; formally,
${\sim} \subseteq {\approx}$.

\end{lemma}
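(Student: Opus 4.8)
The plan is to exhibit strong bisimilarity itself as a weak bisimulation and then appeal to the fact that weak bisimilarity is the greatest weak bisimulation. Concretely, I would invoke Isabelle's \lstinline|coinduction| principle for $\approx$ with the candidate relation taken to be $\sim$, which reduces the goal to the weak-bisimulation transfer conditions: whenever $p \sim q$ and $p \trans{\alpha} p'$, there is a $q'$ such that $q$ can reach $q'$ by a weak transition labelled $\alpha$ and $p' \sim q'$, and symmetrically with the roles of $p$ and $q$ exchanged.

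The one auxiliary ingredient needed is a small lemma saying that every strong transition is in particular a weak transition: if $q \trans{\alpha} q'$, then $q$ performs the weak $\alpha$-labelled transition to $q'$. For $\alpha = \tau$ this is the single-step instance of the reflexive–transitive closure of $\trans{\tau}$, and for a visible action it is the instance with no surrounding $\tau$-steps. This is immediate from the definition of weak transitions, but it has to be stated explicitly so that the shapes line up with the transfer conditions.

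With that in hand, the transfer conditions follow directly from the defining clauses of strong bisimilarity. Given $p \sim q$ and $p \trans{\alpha} p'$, unfolding the strong-bisimulation property of $\sim$ yields a $q'$ with $q \trans{\alpha} q'$ and $p' \sim q'$; the auxiliary lemma promotes $q \trans{\alpha} q'$ to the weak transition the coinduction rule asks for, and $p' \sim q'$ is exactly the residual relatedness demanded by the coinduction hypothesis. The symmetric case is handled identically, using the symmetry of $\sim$.

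I do not expect a genuine mathematical obstacle here; the only thing requiring care is bookkeeping — ensuring that the form of the weak transition supplied by the auxiliary lemma matches the form expected by the coinduction rule Isabelle generates for $\approx$ (in particular the split between the $\tau$ case and the visible-action case), and presenting the candidate relation in the shape the \lstinline|coinduction| method accepts. Everything else is routine definitional unfolding.
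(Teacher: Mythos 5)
Your proposal is correct and is the canonical argument: every strong transition is in particular a weak transition, so $\sim$ itself satisfies the weak-bisimulation transfer conditions and is contained in the greatest weak bisimulation $\approx$. The paper gives no in-text proof of this lemma (it defers it, like the other standard bisimilarity properties, to the accompanying Isabelle code), and your route---coinduction for $\approx$ with candidate relation $\sim$ plus the strong-to-weak transition lemma---is exactly the approach one would expect that formalization to take.
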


\begin{lemma}[Congruence properties of bisimilarities]

Strong and weak bisimilarity are congruence relations with respect to
parallel composition and restriction; that is, they are equivalence
relations, and the following propositions hold:
\begin{align}
\label{parallel-compatibility-strong}
p_{1} \sim p_{2} \conj q_{1} \sim q_{2}
& \rightarrow
p_{1} \parallel q_{1} \sim p_{2} \parallel q_{2}
\\
\label{parallel-compatibility-weak}
p_{1} \approx p_{2} \conj q_{1} \approx q_{2}
& \rightarrow
p_{1} \parallel q_{1} \approx p_{2} \parallel q_{2}
\\
(\uniquant{a} P_{1} \app a \sim P_{2} \app a)
& \rightarrow
\newchannel{a} P_{1} \app a \sim \newchannel{a} P_{2} \app a
\\
(\uniquant{a} P_{1} \app a \approx P_{2} \app a)
& \rightarrow
\newchannel{a} P_{1} \app a \approx \newchannel{a} P_{2} \app a
\end{align}

\end{lemma}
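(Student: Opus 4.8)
The plan is to follow the standard recipe for congruence results in process calculi: first dispatch the equivalence-relation requirements using generic facts about bisimulations, and then establish each compatibility property by exhibiting a suitable (weak) bisimulation relation — or, in the spirit of this paper, a relation that is a bisimulation \emph{up to} some sound closure operator — and discharging the resulting obligations by case analysis on the applicable transition rules. That $\sim$ and $\approx$ are equivalence relations follows from the usual closure properties of (weak) bisimulations: the identity relation is one (reflexivity), the converse of one is again one (symmetry), and a relational composition of two is again one (transitivity).

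For \statementref{parallel-compatibility-strong} I would take as candidate relation
\[
    \mathcal{R} =
    \{(p_{1} \parallel q_{1}, p_{2} \parallel q_{2}) \mid
      p_{1} \sim p_{2} \conj q_{1} \sim q_{2}\}
\]
and show it is a strong bisimulation. Since $\mathcal{R}$ is symmetric it suffices to check the simulation condition in one direction. A transition out of $p_{1} \parallel q_{1}$ must be an instance of one of the rules $\rulecommunicationltr$, $\rulecommunicationrtl$, $\ruleparallelleft$, $\ruleparallelright$; in the two parallel-composition cases one uses the bisimulation game for $p_{1} \sim p_{2}$ (resp.\ $q_{1} \sim q_{2}$) to produce the matching move, and the residual pair lies in $\mathcal{R}$ because the other component is untouched; in the two communication cases one combines a simulating sending move with a simulating receiving move. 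Note that for \emph{strong} bisimilarity this candidate closes up on its own, so no up-to reasoning is strictly needed.

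The weak case \statementref{parallel-compatibility-weak} proceeds analogously with $\sim$ replaced by $\approx$, but it rests on an auxiliary observation I would prove first, by induction on the length of a transition sequence: weak transitions are preserved by the contexts $\cdot \parallel q$ and $p \parallel \cdot$, which is just iterated use of $\ruleparallelleft$ (resp.\ $\ruleparallelright$). With this, a $\ruleparallelleft$-step of $p_{1} \parallel q_{1}$ is answered by the weak move of $p_{2}$ supplied by $p_{1} \approx p_{2}$, transported into the context. The delicate point is the communication cases: a single $\tau$ arising from a synchronization of $p_{1}$ with $q_{1}$ must be matched by running the $\tau$-prefix of $p_{2}$'s weak send, then the $\tau$-prefix of $q_{2}$'s weak receive, then the synchronization, then the two $\tau$-suffixes — so the bookkeeping about where the visible step sits inside each weak sequence is what needs care. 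A sound up-to method from the paper's algebra, such as up-to-expansion, can absorb some of this; one does need to respect its side conditions, as plain up-to-weak-bisimilarity would be unsound.

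For the restriction cases I would use
\[
    \mathcal{R} =
    \{(\newchannel{a} P_{1} \app a, \newchannel{a} P_{2} \app a) \mid
      \uniquant{a} P_{1} \app a \sim P_{2} \app a\}
\]
and its $\approx$-analogue. Because of the HOAS formulation a transition of $\newchannel{a} P_{1} \app a$ can only come from rule $\rulenewchannel$, hence has the shape $\newchannel{a} P_{1} \app a \trans{\alpha} \newchannel{a} Q_{1} \app a$ with $\uniquant{a} P_{1} \app a \trans{\alpha} Q_{1} \app a$. Matching it requires producing a \emph{single} continuation function $Q_{2}$ with $\uniquant{a} P_{2} \app a \trans{\alpha} Q_{2} \app a$ and $\uniquant{a} Q_{1} \app a \sim Q_{2} \app a$: pointwise, the bisimulation game for $P_{1} \app a \sim P_{2} \app a$ hands us, for each channel~$a$, a suitable residual, and one assembles these into $Q_{2}$ using Hilbert choice. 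I expect this uniform-choice step — turning a family of per-channel matching moves into one meta-level function — to be the main conceptual obstacle, together with the weak-communication interleaving above; the weak restriction case additionally needs the routine lemma that $\rulenewchannel$ lifts to weak transitions, again by induction.
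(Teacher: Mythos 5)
The paper offers no textual proof of this lemma at all---it only points to the accompanying Isabelle code---so there is nothing to compare line by line; judged on its own merits, most of your proposal is the sound, standard argument. The equivalence-relation part, both parallel-composition cases (your interleaving of the $\tau$-prefixes via $\ruleparallelleft$ and $\ruleparallelright$ around the synchronization is exactly the right bookkeeping, and in fact no up-to method is needed, since your candidate relation is already a weak bisimulation as it stands), and the strong restriction case all go through; in the latter, each pointwise match is a \emph{single} transition with the \emph{same} action $\alpha$, so your Hilbert-choice assembly of $Q_{2}$ really does produce a premise of the uniform shape $\uniquant{a} P_{2} \app a \trans{\alpha} Q_{2} \app a$ required by $\rulenewchannel$.

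The genuine gap is the weak restriction case, which you dismiss as ``the routine lemma that $\rulenewchannel$ lifts to weak transitions, again by induction.'' With the pointwise premise that the bisimulation game actually gives you, that lemma is not routine---in the HOAS semantics presented here it is false. Rule $\rulenewchannel$ lifts only transitions that are \emph{uniform} in the bound channel: one action and one residual function working for all $a$ simultaneously. The weak answers supplied by $\uniquant{a} P_{1} \app a \approx P_{2} \app a$ may have different lengths and different intermediate processes for different $a$, so there is no single length to induct on and no uniform chain to feed to $\rulenewchannel$. Concretely, let $P \app a$ be $d \send x$ when $a = c$ and $\newchannel{e} (e \send y \parallel e \receive{z} d \send x)$ otherwise: every $P \app a$ weakly performs $d \sending x$, yet $\newchannel{a} P \app a$ has no transition whatsoever, because no single action and residual fits all $a$ at once (for $a = c$ only $d \sending x$ is available, for the other $a$ only $\tau$). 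This is precisely the exotic-terms subtlety the paper itself flags in its discussion of HOAS---the semantics deliberately denies transitions to exotic restrictions---and it is where the real work of the weak restriction case lies; your proposal, as written, does not address it, and whatever route the Isabelle development takes there (via the full semantics' treatment of restriction) cannot be the pointwise lifting you describe.
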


\begin{lemma}[Fundamental bisimilarity properties]

\label{fundamental-bisimilarities}

The following strong bisimilarity properties hold:
\begin{align}
\label{parallel-left-identity}
\deadlock \parallel p
& \sim
p
\\
p \parallel \deadlock
& \sim
p
\\
\label{parallel-associativity}
(p \parallel q) \parallel r
& \sim
p \parallel (q \parallel r)
\\
\label{parallel-commutativity}
p \parallel q
& \sim
q \parallel p
\\
\newchannel{a} \newchannel{b} P \app a \app b
& \sim
\newchannel{b} \newchannel{a} P \app a \app b
\\
\newchannel{a} p
& \sim
p
\end{align}

\end{lemma}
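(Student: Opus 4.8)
The plan is to derive each of the six bisimilarities by coinduction, in every case exhibiting a relation that contains the claimed pair and is a bisimulation \emph{up to} strong bisimilarity (and, where convenient, up to the parallel and restriction congruences of the previous lemma). Recall that $\sim$ is the largest bisimulation, so for a claimed pair $(p_{1}, p_{2})$ it suffices to supply a relation $\mathcal{R}$ with $(p_{1}, p_{2}) \in \mathcal{R}$ such that, whenever $(r, s) \in \mathcal{R}$ and $r \trans{\alpha} r'$, there is $s'$ with $s \trans{\alpha} s'$ and $(r', s')$ in the up-to closure of $\mathcal{R}$, and symmetrically. In Isabelle this is exactly what the \lstinline|coinduction| method together with the formalized ``up to'' algebra delivers: $\mathcal{R}$ can be taken to be a single parametrized pair, and each proof obligation collapses to inverting the rules of \autoref{thorn-calculus-no-mobility-transition-rules} on the left-hand process and replaying the matching derivation on the right-hand one.

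For the parallel laws the relations are the obvious ones. For $\deadlock \parallel p \sim p$ take $\mathcal{R} = \{(\deadlock \parallel p,\, p)\}$ (ranging over all $p$): since $\deadlock$ has no transitions, rule $(\ruleparallelleft)$ and the communication rules $(\rulecommunicationltr)$, $(\rulecommunicationrtl)$ never fire, so every move of $\deadlock \parallel p$ comes via $(\ruleparallelright)$ from a move of $p$ and is matched by that very move, and conversely $(\ruleparallelright)$ lifts every move of $p$; the up-to-$\sim$ closure absorbs the leftover $\deadlock$. The right-identity law is dual, or follows from the left one together with commutativity. For commutativity take $\{(p \parallel q,\, q \parallel p)\}$ and observe that the rule pairs $(\ruleparallelleft)/(\ruleparallelright)$ and $(\rulecommunicationltr)/(\rulecommunicationrtl)$ simply exchange roles under the swap. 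For associativity relate $(p \parallel q) \parallel r$ with $p \parallel (q \parallel r)$ and do a case analysis on which of $p$, $q$, $r$ moves, or which two of them communicate; every case is routine reassembly of the same rule instances, the only mildly delicate one being a $\tau$ arising from communication between $r$ and one of $p$, $q$, where one has to relocate the inner $(\ruleparallelleft)$ or $(\ruleparallelright)$ step across the re-association. Working up to $\sim$ (equivalently, up to the parallel congruence of the preceding lemma) spares us from predicting the exact shape of the residuals.

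The restriction laws are where the HOAS encoding and rule $(\rulenewchannel)$ do the real work. For $\newchannel{a} \newchannel{b} P \app a \app b \sim \newchannel{b} \newchannel{a} P \app a \app b$ use $\mathcal{R} = \{(\newchannel{a} \newchannel{b} Q \app a \app b,\, \newchannel{b} \newchannel{a} Q \app a \app b)\}$ ranging over all two-argument $Q$, and invert $(\rulenewchannel)$ twice: a move of the left-hand side is justified by $\uniquant{a} \uniquant{b} P \app a \app b \trans{\alpha} Q \app a \app b$ for some such $Q$, and the very same premise, consumed with the two quantifiers in the opposite order, justifies a move of the right-hand side to the matching residual, which again lies in $\mathcal{R}$. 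The genuinely new case is $\newchannel{a} p \sim p$, where the body of the restriction does not depend on $a$: here a move $\newchannel{a} p \trans{\alpha} \newchannel{a} p'$ must, by $(\rulenewchannel)$, come from $\uniquant{a} p \trans{\alpha} p' \app a$, and one needs to conclude that, the body being constant, the residual $p'$ is again constant, so that $p \trans{\alpha} p'$ and $(\newchannel{a} p',\, p') \in \mathcal{R}$; the converse direction just re-applies $(\rulenewchannel)$.

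\textbf{The law $\newchannel{a} p \sim p$ is the step I expect to be the main obstacle.} It rests on the fact that a transition out of a restriction whose body ignores the bound channel cannot manufacture a residual that uses that channel --- a property of the interaction between $\trans{}$ and the HOAS function space that is trivial to assert informally but genuinely subtle to state and discharge, since in a higher-order abstract syntax encoding the continuation in $(\rulenewchannel)$ is an arbitrary meta-level function and one has to pin down precisely which such functions can arise. This is exactly the sort of lemma whose concise-yet-machine-checked treatment --- via \lstinline|coinduction|, the ``up to'' algebra, and a disciplined use of higher-order abstract syntax --- the paper sets out to showcase.
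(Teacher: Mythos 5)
Your general strategy---coinduction with small parametrized relations, closed up to strong bisimilarity and context, discharging each obligation by inverting the rules of \autoref{thorn-calculus-no-mobility-transition-rules} and replaying them on the other side---is exactly the style the paper advocates; the paper itself gives no proof of this lemma in the text (it defers to the accompanying Isabelle code), and your treatment of the four parallel laws is fine as it stands (for the left-identity law you do not even need the up-to closure, since the residual pair $(\deadlock \parallel p',\, p')$ already lies in your relation; appealing to ``absorbing the leftover $\deadlock$'' by $\sim$ there would be circular). The exchange-of-restrictions law also goes through, modulo a silent appeal to choice when you repackage the two nested inversions of $(\rulenewchannel)$ into a single two-argument residual $Q$.

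The genuine gap is in the law $\newchannel{a} p \sim p$, precisely the step you flag as the main obstacle: with the no-mobility rule $(\rulenewchannel)$ as stated, whose premise is $\uniquant{a} p \trans{\alpha} Q \app a$ for an \emph{arbitrary} meta-level function $Q$, a constant body does not force a constant residual, so ``the residual $p'$ is again constant'' cannot be concluded---and without it the law is not even true for the simplified rules. Concretely, let $p = (a \receive{x} \deadlock) \parallel (a \receive{x} c \send x)$; it has two transitions labelled $a \receiving v$, to $r_1 = \deadlock \parallel (a \receive{x} c \send x)$ and $r_2 = (a \receive{x} \deadlock) \parallel c \send v$, which are not bisimilar. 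Taking the exotic residual $Q$ with $Q \app c = r_1$ and $Q \app e = r_2$ for $e \neq c$ satisfies the premise, so $\newchannel{e} p \trans{a \receiving v} \newchannel{e} Q \app e$; but $\newchannel{e} Q \app e$ is bisimilar to neither $r_1$ nor $r_2$, because by $(\rulenewchannel)$ its further moves are constrained by \emph{both} branches of $Q$ simultaneously (it cannot send like $r_2$, and after a further receipt it deadlocks where $r_1$'s residual can still send), so $p$ has no matching move. What makes the law true in the Þ-calculus is the full transition system, which---as the paper indicates in its discussion of exotic terms---tracks the channels introduced by restrictions and derives a transition of $\newchannel{a} P \app a$ from a single, uniformly adapted transition of the body, so that non-uniform residuals of the above kind never arise. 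Your proof of this case therefore has to invoke exactly that machinery (or an explicit uniformity lemma for residuals under restriction) rather than the constancy claim; as written, the step would fail against the rules you are actually using.
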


Unlike the asynchronous $\pi$-calculus, the Þ-calculus does not contain
a construct for guarded recursion, and it also does not contain a
replication construct as found in the synchronous $\pi$-calculus. This
is because the use of HOAS allows us to resort to the recursion features
of the host language and in particular to build infinite processes,
since the type of processes is coinductive. We could use this
possibility to define guarded recursion and replication on top of the
Þ-calculus and also directly to construct processes involving
repetition. However, we introduce a guarded replication construct
instead, which we use to realize any repetition.

\begin{definition}[Repeating receivers]

Processes $a \repeatedreceive{x} P \app x$, where $a$~denotes channels,
$x$~denotes values, and $P$~denotes functions from values to processes,
are defined as follows:
\begin{equation}
\label{repeating-receivers}
a \repeatedreceive{x} P \app x
=
a \receive{x} (P \app x \parallel a \repeatedreceive{x} P \app x)
\end{equation}
Such processes are called repeating receivers. The precedence of
$\rhd^{\infty}$ is the same as the one of~$\rhd$.

\end{definition}

As can be seen from \autoref{repeating-receivers}, a repeating receiver
$a \repeatedreceive{x} P \app x$ repeatedly receives values~$x$ from
channel~$a$ and after each receipt initiates the execution of $P \app
x$.

\begin{lemma}[Transitions from repeating receivers]

\label{repeating-receivers-transitions}

The only transitions possible from repeating receivers are of the form
$a \repeatedreceive{x} P \app x \trans{a \receiving x} P \app x
\parallel a \repeatedreceive{x} P \app x$.

\end{lemma}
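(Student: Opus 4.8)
The plan is to reduce the claim to a single rule inversion. By \autoref{repeating-receivers}, the process $a \repeatedreceive{x} P \app x$ is equal to the plain receiver $a \receive{x} (P \app x \parallel a \repeatedreceive{x} P \app x)$, so every transition out of the former is a transition out of the latter and conversely. The transition relation $\trans{}$ is defined by an ordinary inductive definition — it is only the \emph{type} of processes that is coinductive — so rule inversion is available, and I would apply it to an assumed transition $a \receive{x} (P \app x \parallel a \repeatedreceive{x} P \app x) \trans{\alpha} r$, case-splitting over the rules of \autoref{thorn-calculus-no-mobility-transition-rules}.

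Of those rules, only $(\rulereceiving)$ has a conclusion whose source process is a receiver; the sources of the remaining rules are a sender (in $(\rulesending)$), parallel compositions (in $(\rulecommunicationltr)$, $(\rulecommunicationrtl)$, $(\ruleparallelleft)$, and $(\ruleparallelright)$), and a restriction (in $(\rulenewchannel)$) — all formed by constructors distinct from the receiver constructor. Hence $(\rulereceiving)$ is the only applicable rule, and, instantiated at the received value, it forces $\alpha = a \receiving x$ and $r = P \app x \parallel a \repeatedreceive{x} P \app x$. Rewriting the source back with \autoref{repeating-receivers} gives exactly the transition shape asserted by the lemma; and since that very transition is in turn derivable by $(\rulereceiving)$ from the unfolded form, the transitions out of $a \repeatedreceive{x} P \app x$ are precisely the ones of that shape, so the statement is non-vacuous.

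The only subtle point, and hence the main obstacle, is the bookkeeping around higher-order abstract syntax. The receiver construct binds its continuation as a meta-level function, so the inversion has to be carried out with that function instantiated to $\lambda x.\, P \app x \parallel a \repeatedreceive{x} P \app x$; and because $a \repeatedreceive{x} P \app x$ is defined corecursively, its characteristic equation \autoref{repeating-receivers} must be used as a one-step rewrite rather than being unfolded indefinitely. In Isabelle this comes down to rewriting once with that equation and then invoking the elimination rule for receiver transitions that is generated from the inductive definition of~$\trans{}$; no appeal to coinduction is needed here.
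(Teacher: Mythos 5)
Your proposal is correct and follows essentially the same route as the paper's own proof: unfold the repeating receiver via its defining equation into an ordinary receiver, observe that $\rulereceiving$ is the only transition rule whose source is a receiver, and conclude that inversion forces the stated action and target. The extra remarks on derivability of the transition and on the HOAS bookkeeping are fine but not needed for the claim as stated.
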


\begin{proof}

According to \autoref{repeating-receivers}, repeating receivers are
receivers of a special kind. The only transition rule that introduces
transitions from receivers is~$\rulereceiving$, and applying this rule
to repeating receivers leads to transitions of the above-mentioned form.
\end{proof}

\section{The Communication Language}

The communication language is a process calculus specifically designed
for describing communication networks. It differs from the Þ-calculus in
that it does not allow for arbitrary sending and receiving but instead
provides constructs for forwarding, removing, and duplicating values in
channels. These constructs are more high-level than the Þ-calculus
constructs they replace. They are also more limiting but still permit
the communication language to express data flow in a network. By staying
within the confines of the communication language, our network-related
specifications and proofs tend to be well structured and comprehensible.

\begin{definition}[Syntax of communication language processes]

The syntax of communication language processes is given by the following
BNF rule, where $a$,~$b$, and~$b_{i}$ denote channels, $p$~and~$q$
denote processes, and $P$~denotes functions from channels to processes:
\begin{equation*}
\mathit{Process} \bnfdef
    \deadlock                             \mid
    a \distributor [b_{1}, \ldots, b_{n}] \mid
    a \unidirectionalbridge b             \mid
    a \bidirectionalbridge b              \mid
    \loser a                              \mid
    \duplicator a                         \mid
    \duploser a                           \mid
    p \parallel q                         \mid
    \newchannel{a} P \app a
\end{equation*}
The processes generated by the different alternatives of this BNF rule
are called the stop process, distributors, unidirectional bridges,
bidirectional bridges, losers, duplicators, duplosers, parallel
compositions, and restrictions, respectively. Parallel composition is
right-associative and has lowest precedence; the other constructs have
highest precedence.

\end{definition}

The stop process, parallel compositions, and restrictions behave like
they do in the Þ-calculus. The behavior of the other communication
language constructs is informally characterized as follows:
\begin{itemize}

\item

A distributor $a \distributor [b_{1}, \ldots, b_{n}]$ continuously
forwards values from channel~$a$ to all channels~$b_{i}$.

\item

A unidirectional bridge $a \unidirectionalbridge b$ continuously
forwards values from channel~$a$ to channel~$b$.

\item

A bidirectional bridge $a \bidirectionalbridge b$ continuously forwards
values from channel~$a$ to channel~$b$ and from channel~$b$ to
channel~$a$.

\item

A loser $\loser a$ continuously removes values from channel~$a$.

\item

A duplicator $\duplicator a$ continuously duplicates values in
channel~$a$.

\item

A duploser $\duploser a$ continuously removes values from and duplicates
values in channel~$a$.

\end{itemize}

\begin{example}[Reliable anycast with three receivers]

\label{reliable-anycast}

Consider a reliable anycast connection between a sender and three
receivers, the latter being numbered from~$1$ to~$3$. Assume that the
sender is equipped with a buffer for packets to be sent and each
receiver is equipped with a buffer for packets received. If we model the
sender’s buffer by a channel~$s$ and the buffer of each receiver~$i$ by
a channel~$r_{i}$, this anycast connection can be modeled by the
following process:
\begin{equation*}
\newchannel{t}
(
    s \unidirectionalbridge t     \parallel
    t \unidirectionalbridge r_{1} \parallel
    t \unidirectionalbridge r_{2} \parallel
    t \unidirectionalbridge r_{3}
)
\end{equation*}
Note that values in the local channel~$t$ model packets in transit.

\end{example}

\begin{example}[Unreliable broadcast with three receivers]

\label{unreliable-broadcast}

Consider a broadcast connection between a sender and three receivers
that is unreliable in the sense that packets may be lost or duplicated.
If channels $s$,~$r_{1}$, $r_{2}$, and~$r_{3}$ model send and receive
buffers like in \autoref{reliable-anycast}, this broadcast connection
can be modeled by the following process:
\begin{equation*}
\newchannel{t}
(
    s \unidirectionalbridge t     \parallel
    \duploser t                   \parallel
    t \unidirectionalbridge r_{1} \parallel
    t \unidirectionalbridge r_{2} \parallel
    t \unidirectionalbridge r_{3}
)
\end{equation*}
This process models indeed a broadcast connection, not an anycast
connection, because due to duplication a single value sent to~$t$ may be
forwarded to different channels~$r_{i}$.

\end{example}

\begin{definition}
    [Embedding of the communication language in the Þ-calculus]

\label{communication-language-embedding}

The communication language is a DSL embedded in the Þ-calculus. The stop
process, parallel composition, and restriction are directly taken from
the Þ-calculus, and the other communication language constructs are
derived from Þ-calculus constructs and the repeating receiver construct
as follows:
\begin{align}
a \distributor [b_{1}, \ldots, b_{n}]
& =
a \repeatedreceive{x}
(
    b_{1} \send x \parallel
    \ldots        \parallel
    b_{n} \send x \parallel
    \deadlock
)
\\
a \unidirectionalbridge b
& =
a \distributor [b]
\\
a \bidirectionalbridge b
& =
a \unidirectionalbridge b \parallel b \unidirectionalbridge a
\\
\loser a
& =
a \distributor []
\\
\duplicator a
& =
a \distributor [a, a]
\\
\duploser a
& =
\loser a \parallel \duplicator a
\end{align}

\end{definition}

Note that among the derivations in
\autoref{communication-language-embedding} only the one of distributors
directly refers to constructs outside the communication language; all
other derivations refer to communication language constructs only.
Therefore, we consider $\unidirectionalbridge$, $\bidirectionalbridge$,
$\loser$, $\duplicator$, and~$\duploser$ as merely providing syntactic
sugar and discuss only the communication language fragment formed by
$\deadlock$, $\Rightarrow$, $\parallel$, and~$\nu$ in the remainder of
this section.

Since the communication language is embedded in the Þ-calculus, we can
derive a formal semantics for it from the formal semantics of the
Þ-calculus.\footnote{Also this semantics ignores mobility, because we
derive it from the no-mobility version of the Þ-calculus semantics.
However, unlike with the Þ-calculus, the gap between the semantics
presented here and the full semantics is minimal, since the absence of
arbitrary sending makes it impossible to send local channels to the
environment. In fact, the only additional feature of the full semantics
is that it accounts for the possibility of distributors \emph{receiving}
previously unknown channels from the environment.} For dealing with the
constructs inherited from the Þ-calculus, we reuse the corresponding
transition rules, which are $\rulecommunicationltr$,
$\rulecommunicationrtl$, $\ruleparallelleft$, $\ruleparallelright$,
and~$\rulenewchannel$. For dealing with distributors, which are
receivers of a particular shape, we specialize the $\rulereceiving$-rule
appropriately, resulting in a new rule~$\ruledistribution$. Transitions
from distributors with at least one target channel result in processes
that contain senders. Therefore, our transition system must be able to
cope with the additional presence of senders in processes. We reuse the
$\rulesending$-rule from the Þ-calculus for this purpose. We call the
communication language extended with senders the extended communication
language.

\begin{definition}
    [Syntax of processes of the extended communication language]

The syntax of processes of the extended communication language is given
by the following BNF rule, where $a$,~$b$, and~$b_{i}$ denote channels,
$x$~denotes values, $p$~and~$q$ denote processes, and $P$~denotes
functions from channels to processes:
\begin{equation*}
\mathit{Process} \bnfdef
    \deadlock                             \mid
    a \send x                             \mid
    a \distributor [b_{1}, \ldots, b_{n}] \mid
    a \unidirectionalbridge b             \mid
    a \bidirectionalbridge b              \mid
    \loser a                              \mid
    \duplicator a                         \mid
    \duploser a                           \mid
    p \parallel q                         \mid
    \newchannel{a} P \app a
\end{equation*}
The processes generated by the different alternatives of this BNF rule
are called the stop process, senders, distributors, unidirectional
bridges, bidirectional bridges, losers, duplicators, duplosers, parallel
compositions, and restrictions, respectively. Parallel composition is
right-associative and has lowest precedence; the other constructs have
highest precedence.

\end{definition}

\begin{proposition}[Semantics of the extended communication language]

The restriction of the transition relation~$\trans{}$ of the Þ-calculus
to processes of the extended communication language is generated by the
introduction rules in
\autoref{extended-communication-language-no-mobility-transition-rules}.%
\begin{figure}

\centering
\begin{math}
\begin{rulesinfigure}
\sendingrule
\nextline
\introrule
    {}
    {
        a \distributor [b_{1}, \ldots, b_{n}]
        \trans{a \receiving x}
        (
            b_{1} \send x \parallel
            \ldots        \parallel
            b_{n} \send x \parallel
            \deadlock
        ) \parallel
        a \distributor [b_{1}, \ldots, b_{n}]
    }
    {\ruledistribution}
\nextline
\parallelandnewchannelrules
\end{rulesinfigure}
\end{math}

\caption{%
    No-mobility versions
    of the transition rules
    of the extended communication language%
}

\label{extended-communication-language-no-mobility-transition-rules}

\end{figure}

\end{proposition}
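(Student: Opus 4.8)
The plan is to prove the two inclusions between the restriction of the Þ-calculus transition relation to processes of the extended communication language and the relation generated by the rules in \autoref{extended-communication-language-no-mobility-transition-rules}, exploiting the fact that, by \autoref{communication-language-embedding}, every such process is a Þ-calculus process of a highly constrained shape. \emph{Soundness of the new rules.} I would first check that every transition derivable from the rules in \autoref{extended-communication-language-no-mobility-transition-rules} is a Þ-calculus transition whose source and target are again processes of the extended communication language. The rules $\rulesending$, $\rulecommunicationltr$, $\rulecommunicationrtl$, $\ruleparallelleft$, $\ruleparallelright$, and $\rulenewchannel$ are literally transition rules of the Þ-calculus (\autoref{thorn-calculus-no-mobility-transition-rules}), so the corresponding cases reduce to observing that the extended communication language is closed under the process formers occurring in them, including the immediate sub-processes appearing in the targets of $\rulecommunicationltr$ and $\rulecommunicationrtl$, and that $\deadlock$ is a process of the language. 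The only genuinely new rule is $\ruledistribution$; here I would unfold the distributor $a \distributor [b_{1}, \ldots, b_{n}]$ via \autoref{communication-language-embedding} into the repeating receiver $a \repeatedreceive{x} (b_{1} \send x \parallel \ldots \parallel b_{n} \send x \parallel \deadlock)$ and then, using the defining equation of repeating receivers (\autoref{repeating-receivers}), into an ordinary Þ-calculus receiver whose continuation function maps $x$ to $(b_{1} \send x \parallel \ldots \parallel b_{n} \send x \parallel \deadlock) \parallel a \distributor [b_{1}, \ldots, b_{n}]$. Applying $\rulereceiving$ to this receiver produces exactly the transition in the conclusion of $\ruledistribution$, whose target is again a process of the extended communication language.

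\emph{Completeness of the new rules.} Conversely, I would show that whenever $p \trans{\alpha} p'$ holds in the Þ-calculus and $p$ is a process of the extended communication language, then $p'$ is again such a process and the transition is derivable from the rules in \autoref{extended-communication-language-no-mobility-transition-rules}. This proceeds by rule induction on the Þ-calculus transition relation together with a case analysis on the shape of $p$. Such a $p$ is either $\deadlock$, which has no outgoing transition; a sender $a \send x$, whose only transition comes from $\rulesending$ and is covered by the identically named new rule; a distributor (possibly hidden behind the syntactic sugar $\unidirectionalbridge$, $\bidirectionalbridge$, $\loser$, $\duplicator$, or $\duploser$, which \autoref{communication-language-embedding} expands into distributors and parallel compositions), in which case $p$ is a repeating receiver and, by \autoref{repeating-receivers-transitions}, its only transitions have the distribution form, which is exactly $\ruledistribution$; a parallel composition $p_{1} \parallel p_{2}$, where the transition must have been derived by one of $\rulecommunicationltr$, $\rulecommunicationrtl$, $\ruleparallelleft$, or $\ruleparallelright$, so that the induction hypothesis applied to the sub-derivation(s), together with the fact that senders perform only sending actions and distributors only receiving actions, reconstructs the transition via the corresponding new rule; or a restriction $\newchannel{a} P \app a$, where the transition must come from $\rulenewchannel$ and the induction hypothesis applied under the binder finishes the case. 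In every case, closure of the extended communication language under the relevant process formers ensures that $p'$ remains within the language.

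The step I expect to be the main obstacle is the treatment of distributors, that is, matching the new rule $\ruledistribution$ against the behavior of a repeating receiver: one has to thread the unfolding given by \autoref{repeating-receivers} through both directions with care and, in the mechanized development, cope with the fact that repeating receivers are built using the host language's recursion over the coinductively defined process type and with the HOAS encoding of receiver continuations. A further point requiring attention is making the phrase ``restriction of the transition relation to processes of the extended communication language'' precise, which I would do by introducing a predicate characterizing these processes and establishing, as part of the completeness argument, that it is preserved by $\trans{}$; this closure fact is what turns the two inclusions above into the claimed generation statement.
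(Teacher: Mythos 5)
Your proposal is correct and follows essentially the route the paper itself takes: the paper states this proposition without a written proof, resting on the observation (made in the surrounding text and reused in the proof of the subsequent bisimilarity lemma) that all rules except $\ruledistribution$ are inherited verbatim from the Þ-calculus and that $\ruledistribution$ is just $\rulereceiving$ specialized to distributor-shaped sources, distributors being the only receivers occurring in the language. Your two-inclusion argument by rule induction, with \autoref{communication-language-embedding}, the unfolding \autoref{repeating-receivers}, and \autoref{repeating-receivers-transitions} handling the distributor case and a closure predicate making the ``restriction'' precise, is the natural elaboration of exactly that observation.
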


\begin{lemma}
    [%
        Strong and weak bisimilarity of processes of the extended
        communication language%
    ]

The strong and weak bisimilarity relations derived from the transition
relation described in
\autoref{extended-communication-language-no-mobility-transition-rules}
can be obtained by restricting the bisimilarity relations
$\sim$~and~$\approx$ of the Þ-calculus to processes of the extended
communication language.

\end{lemma}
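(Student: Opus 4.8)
The plan is to use the fact that the processes of the extended communication language form a \emph{transition-closed} sublanguage of the Þ-calculus. Writing $\mathcal{E}$ for the set of these processes, I would show that $\mathcal{E}$ is closed under $\trans{}$ and then appeal to the general observation that, for a transition-closed subset of any labeled transition system, bisimilarity on the sub-system is exactly the restriction of the ambient bisimilarity. Since the sender, stop, parallel-composition, and restriction constructs, as well as the distributors, are all Þ-calculus processes by the embedding, $\mathcal{E}$ is genuinely a subset of the Þ-calculus processes, so this framing applies.

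The key step is the closure lemma: if $p \in \mathcal{E}$ and $p \trans{\alpha} p'$, then $p' \in \mathcal{E}$. I would prove it by induction on the derivation of $p \trans{\alpha} p'$ using the Þ-calculus transition rules. The $\rulesending$-rule produces $\deadlock$; the only transitions from distributors are those of the~$\ruledistribution$-rule and produce $(b_{1} \send x \parallel \ldots \parallel b_{n} \send x \parallel \deadlock) \parallel a \distributor [b_{1}, \ldots, b_{n}]$, a parallel composition of $\mathcal{E}$-processes and hence in $\mathcal{E}$; the parallel-composition rules and the~$\rulenewchannel$-rule preserve membership in $\mathcal{E}$ structurally, the~$\rulenewchannel$-rule using that, by the induction hypothesis applied at every channel, the continuation function on the right-hand side again takes values in $\mathcal{E}$. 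Combined with the preceding proposition, this shows both that the restriction of $\trans{}$ to $\mathcal{E}$ is generated by the rules of \autoref{extended-communication-language-no-mobility-transition-rules} and that no $\trans{}$-transition ever leaves $\mathcal{E}$, so the restricted relation records every transition available to an $\mathcal{E}$-process.

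Given closure, both inclusions are routine. Intersecting a Þ-calculus strong bisimulation with $\mathcal{E} \times \mathcal{E}$ yields a strong bisimulation for the restricted transition relation, since every transition of a related $\mathcal{E}$-process is a restricted transition with target again in $\mathcal{E}$; hence ${\sim}$ cut down to $\mathcal{E}$ is contained in the strong bisimilarity derived from the restricted relation. Conversely, a strong bisimulation for the restricted relation relates only $\mathcal{E}$-processes and, by closure, is already a strong bisimulation in the Þ-calculus, hence contained in~${\sim}$. The argument for~${\approx}$ is identical once one notes that weak transitions are finite chains of $\trans{\tau}$- and $\trans{\alpha}$-steps, so by closure every intermediate process stays in $\mathcal{E}$ and the weak transition relation of the sublanguage is just the restriction of the one of the Þ-calculus.

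I expect the closure lemma to be the main obstacle, and within it the~$\rulenewchannel$-rule case: because restrictions are encoded with higher-order abstract syntax, membership in $\mathcal{E}$ is really a predicate on channel-indexed families of processes, and the rule forces the induction hypothesis to be transported uniformly over all channels in order to conclude that the continuation function of the target restriction is again $\mathcal{E}$-valued. Everything after that — the equivalence-relation properties and the matching of transition challenges with responses — transfers with no extra work once the sub-system is known to be transition-closed.
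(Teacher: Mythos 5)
Your proposal is correct and follows essentially the same route as the paper: both arguments rest on the observation that, on processes of the extended communication language, the Þ-calculus transitions and the transitions generated by the rules of the extended language coincide (the paper phrases this as $\ruledistribution$ being $\rulereceiving$ restricted to distributors), so the strong and weak bisimulation games are the same in both systems. The only difference is one of detail: you make explicit the transition-closure of the sublanguage (and the general fact about transition-closed sub-systems, including the chain argument for weak transitions), which the paper's one-paragraph proof leaves implicit via the preceding proposition.
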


\begin{proof}

The only difference between the transition rules in
Figures~\ref{thorn-calculus-no-mobility-transition-rules}~and~%
\ref{extended-communication-language-no-mobility-transition-rules} is
that the former include~$\rulereceiving$ where the latter
include~$\ruledistribution$. However, $\ruledistribution$ is
just~$\rulereceiving$ restricted to those situations where the source
process has the shape of a distributor. Therefore, simulation in the
extended communication language can be performed according to the
Þ-calculus semantics and only in this way. As a result, processes are
strongly or weakly bisimilar according to the semantics of the extended
communication language exactly if they are bisimilar (strongly or
weakly, respectively) according to the semantics of the Þ-calculus.
\end{proof}

\begin{corollary}

Lemmas \ref{strong-weak-bisimilarity-inclusion}~to~%
\ref{fundamental-bisimilarities} carry over to the extended
communication language.

\end{corollary}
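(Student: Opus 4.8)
The plan is to reduce the corollary to the lemma immediately preceding it, which identifies the strong and weak bisimilarity relations of the extended communication language with the restrictions of the Þ-calculus relations $\sim$~and~$\approx$ to processes of the extended communication language. Granting this identification, two things remain to be checked. First, \autoref{strong-weak-bisimilarity-inclusion} and the two lemmas following it are assertions that hold for \emph{all} Þ-calculus processes. Second, every process expression occurring in those assertions denotes a process of the extended communication language as soon as its free process variables are instantiated with such processes and its free function variables with functions that map into such processes. Together these yield the corollary by plain specialization.

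The second point is verified one constructor at a time. \autoref{strong-weak-bisimilarity-inclusion} mentions no process constructors at all, and restricting two relations, one included in the other, to a common subset preserves the inclusion. The congruence lemma and \autoref{fundamental-bisimilarities} form processes using only $\deadlock$, parallel composition, and restriction, and the BNF rule for the extended communication language is closed under all three: $\deadlock$ is a generator, and whenever $p$ and~$q$ lie in the language and $P$ maps channels into it, so do $p \parallel q$ and $\newchannel{a} P \app a$. In the restriction clauses one must moreover note that the enclosing operations send the function arguments back into the language — for instance that $P$ being such a function makes $a \mapsto \newchannel{b} P \app a \app b$ one as well, which is again an instance of closure under restriction. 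The equivalence-relation part of the congruence lemma transfers because the restriction of an equivalence relation to a subset is an equivalence relation.

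Putting the two points together: each of \autoref{strong-weak-bisimilarity-inclusion}, the congruence lemma, and \autoref{fundamental-bisimilarities}, being a theorem about arbitrary Þ-calculus processes, still holds once all processes and process-valued functions are confined to the extended communication language; and by the preceding lemma the statements so obtained are exactly the corresponding statements about the bisimilarities of the extended communication language.

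I do not expect a genuine obstacle: the mathematical content sits entirely in the preceding lemma, and everything else is the bookkeeping of the closure property above. The one spot that deserves a moment's attention is the handling of the function arguments $P$, $P_{1}$, and~$P_{2}$ in the restriction clauses, where the hypotheses have to be read as saying that these functions map channels to processes of the extended communication language, and one has to confirm that the enclosing process-forming operations preserve this property — which, as noted, is immediate.
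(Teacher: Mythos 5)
Your proposal is correct and matches the paper's (implicit) argument: the corollary is stated without further proof precisely because it follows by specializing Lemmas \ref{strong-weak-bisimilarity-inclusion}--\ref{fundamental-bisimilarities}, which hold for all Þ-calculus processes, to extended-communication-language processes and invoking the preceding lemma identifying the restricted bisimilarities with $\sim$ and $\approx$. Your extra closure checks for $\deadlock$, $\parallel$, and $\nu$ are exactly the routine bookkeeping the paper leaves unstated.
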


\section{A Proof of Idempotency of Repeating Receivers}

\label{repeating-receivers-idempotency-proof}

As mentioned in \autoref{introduction}, the proof of equivalence of
direct broadcast and broadcast via multicast as presented in our
previous work~\cite{jeltsch:2022} relies on certain lower-level
bisimilarity statements. Meanwhile, we have developed proofs for most of
these statements.\footnote{These proofs can be found in the accompanying
Isabelle code.} Some of these proofs merely reduce the respective
bisimilarity statements to more basic bisimilarity statements, but the
proofs of the fundamental statements refer directly to the transition
system semantics of the Þ-calculus and the communication language. These
latter proofs are bisimulation proofs in the style that we advocate in
this paper.

To illustrate this style, let us turn our attention to a group of
idempotency laws. First note that all communication language constructs
not inherited from the Þ-calculus are idempotent up to strong
bisimilarity with respect to parallel composition, which is vital for
our broadcast equivalence proof.

\begin{lemma}[Idempotency of genuine communication language constructs]

The following idempotency properties hold:
\begin{align}
a \distributor bs \parallel a \distributor bs
& \sim
a \distributor bs
\\
a \unidirectionalbridge b \parallel a \unidirectionalbridge b
& \sim
a \unidirectionalbridge b
\\
a \bidirectionalbridge b \parallel a \bidirectionalbridge b
& \sim
a \bidirectionalbridge b
\\
\loser a \parallel \loser a
& \sim
\loser a
\\
\duplicator a \parallel \duplicator a
& \sim
\duplicator a
\\
\duploser a \parallel \duploser a
& \sim
\duploser a
\end{align}

\end{lemma}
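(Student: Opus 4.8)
The plan is to reduce all six laws to the distributor law $a \distributor bs \parallel a \distributor bs \sim a \distributor bs$, which I would in turn obtain as an instance of the more basic fact that repeating receivers are idempotent with respect to parallel composition, namely $a \repeatedreceive{x} P \app x \parallel a \repeatedreceive{x} P \app x \sim a \repeatedreceive{x} P \app x$. By \autoref{communication-language-embedding} a distributor $a \distributor [b_{1}, \ldots, b_{n}]$ \emph{is} the repeating receiver $a \repeatedreceive{x}(b_{1} \send x \parallel \ldots \parallel b_{n} \send x \parallel \deadlock)$, so the distributor law is the instance with $P \app x = b_{1} \send x \parallel \ldots \parallel b_{n} \send x \parallel \deadlock$. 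From the distributor law the remaining five follow by pure equational reasoning: unidirectional bridges, losers, and duplicators are literally distributors ($a \unidirectionalbridge b = a \distributor [b]$, $\loser a = a \distributor []$, $\duplicator a = a \distributor [a, a]$), so those laws are direct instances; for $a \bidirectionalbridge b \parallel a \bidirectionalbridge b$ I would unfold to $(a \unidirectionalbridge b \parallel b \unidirectionalbridge a) \parallel (a \unidirectionalbridge b \parallel b \unidirectionalbridge a)$, re-order and re-bracket — using \statementref{parallel-commutativity}, \statementref{parallel-associativity}, and the $\parallel$-congruence \statementref{parallel-compatibility-strong} — into $(a \unidirectionalbridge b \parallel a \unidirectionalbridge b) \parallel (b \unidirectionalbridge a \parallel b \unidirectionalbridge a)$, apply the unidirectional-bridge law to each factor, and read off $a \unidirectionalbridge b \parallel b \unidirectionalbridge a = a \bidirectionalbridge b$; the duploser law is handled the same way via $\duploser a = \loser a \parallel \duplicator a$.

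For the idempotency of repeating receivers I would use Isabelle's \lstinline|coinduction| method together with an ``up to'' method from our algebra — bisimulation up to $\sim$ (and, if it streamlines the obligations, up to parallel contexts as well). A convenient candidate relation is $\mathcal{R} = \{\,(c \parallel a \repeatedreceive{x} P \app x \parallel a \repeatedreceive{x} P \app x,\; c \parallel a \repeatedreceive{x} P \app x) : c \text{ a process}\,\}$ together with its converse; the lemma itself is then the $c = \deadlock$ case, modulo the left-identity law \statementref{parallel-left-identity}. Carrying the arbitrary ``spectator'' process $c$ is what makes the relation work: a transition of a repeating receiver spawns a fresh copy of $P \app x$, and the relation needs room to absorb it.

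The real work is discharging the simulation obligations. By \autoref{repeating-receivers-transitions} the only transitions of $a \repeatedreceive{x} P \app x$ are $a \repeatedreceive{x} P \app x \trans{a \receiving x} P \app x \parallel a \repeatedreceive{x} P \app x$; in particular a repeating receiver never \emph{sends}, so $a \repeatedreceive{x} P \app x \parallel a \repeatedreceive{x} P \app x$ has no internal $\tau$-transition and its only moves are these receiving moves, now enabled from either copy. Consequently every transition of $c \parallel a \repeatedreceive{x} P \app x \parallel a \repeatedreceive{x} P \app x$ is a move of $c$ alone (matched by the identical move on the right, staying in $\mathcal{R}$), an $a \receiving x$ performed by one of the two receivers, or a $\tau$ arising from $c$ sending on~$a$ and one of the receivers consuming that value — and symmetrically for $c \parallel a \repeatedreceive{x} P \app x$. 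In each of the last two cases both sides produce the same residue $P \app x$, sitting next to $a \repeatedreceive{x} P \app x \parallel a \repeatedreceive{x} P \app x$ on the left and next to $a \repeatedreceive{x} P \app x$ on the right; after normalizing the parallel compositions with \statementref{parallel-associativity} and \statementref{parallel-commutativity} we land back in $\mathcal{R}$ up to $\sim$, with the new spectator $c \parallel P \app x$.

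The main obstacle I anticipate is structural rather than conceptual. The derivative of $a \repeatedreceive{x} P \app x \parallel a \repeatedreceive{x} P \app x$ after a receipt is $(P \app x \parallel a \repeatedreceive{x} P \app x) \parallel a \repeatedreceive{x} P \app x$, which is not syntactically of the shape prescribed by $\mathcal{R}$, so the argument genuinely relies on the ``up to $\sim$'' layer (or an ``up to parallel context'' layer) together with the fundamental bisimilarity laws to re-bracket and re-order compositions — and hence on the soundness of exactly that combination of ``up to'' methods, which is what the formalized ``up to'' algebra is there to supply. A minor point to keep honest is the duplicator instance $bs = [a, a]$, where the residue $P \app x$ contains senders on the receiver's own channel~$a$; this does no harm, since $\mathcal{R}$ already lets the spectator $c$ carry arbitrary senders on~$a$, and such senders only ever produce the $\tau$-transitions already enumerated.
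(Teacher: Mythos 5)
Your proposal is correct, and its first half coincides with the paper's own route: the paper likewise reduces all six laws to the fundamental idempotency of repeating receivers, with the distributor case being the instance $P \app x = b_{1} \send x \parallel \ldots \parallel b_{n} \send x \parallel \deadlock$ and the remaining constructs handled by unfolding their definitions and rearranging with associativity, commutativity and the $\parallel$-congruence. Where you diverge is in the core lemma $a \repeatedreceive{x} P \app x \parallel a \repeatedreceive{x} P \app x \sim a \repeatedreceive{x} P \app x$: the paper keeps the bisimulation relation minimal (just the pair to be proved, as derived automatically by the \lstinline|coinduction| method) and discharges the mismatch between derivatives and sources by working up to strong bisimilarity \emph{and context}, so that after stripping the common context $P \app x \parallel [\cdot]$ one is back at the original pair; you instead bake the context into the relation as an arbitrary spectator $c$ (closed under converse) and use only ``up to $\sim$'' for re-bracketing. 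Your route is sound and buys independence from the soundness of the up-to-context/mutation method, but it pays for this in two ways the paper's proof avoids: the spectator forces extra simulation cases (moves of $c$ alone, and $\tau$-communications between $c$ and the receivers — which you do enumerate correctly, noting via \autoref{repeating-receivers-transitions} that the receivers themselves never send), and the generalized relation is no longer the one the \lstinline|coinduction| method derives from the goal, so in the formalization you would have to supply it explicitly (e.g.\ by first generalizing over $c$ or falling back on \lstinline|coinduct|), which is precisely the boilerplate the paper's combination of the up-to algebra with \lstinline|coinduction| is designed to eliminate.
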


The proofs of these idempotency properties are part of the accompanying
Isabelle code. They reduce these properties to a fundamental idempotency
law, which is idempotency of repeating receivers.

\begin{lemma}[Idempotency of repeating receivers]

\label{repeating-receivers-idempotency}

The following idempotency property holds:%
\begin{equation}
a \repeatedreceive{x} P \app x \parallel a \repeatedreceive{x} P \app x
\sim
a \repeatedreceive{x} P \app x
\end{equation}

\end{lemma}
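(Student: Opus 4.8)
The plan is to prove this by coinduction, exhibiting a strong bisimulation up to~$\sim$. Write $R$ for the repeating receiver $a \repeatedreceive{x} P \app x$. The crucial observation, supplied by \autoref{repeating-receivers-transitions}, is that the only transitions of~$R$ are $R \trans{a \receiving x} P \app x \parallel R$, and that in particular $R$ never performs a sending action. Intuitively, $R \parallel R$ and $R$ should therefore be equivalent: each can only receive on~$a$, thereby spawning one fresh copy of~$P \app x$, after which the situation repeats; the superfluous second copy of~$R$ is taken care of by the coinduction, while the accumulating copies of~$P \app x$ can be absorbed into an arbitrary ambient process. This suggests taking as the candidate relation the family
\[
\mathcal{R} = \{\, (t \parallel R \parallel R,\; t \parallel R) \mid t \text{ an arbitrary process} \,\},
\]
together with its inverse, and showing that it is a strong bisimulation up to~$\sim$; the stated law is then the instance $t = \deadlock$, combined with the left-identity law of \autoref{fundamental-bisimilarities}. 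Closer to the algebra of ``up to'' methods we advocate, one may instead take the one-point relation $\{(R \parallel R, R)\}$ and compose the up-to-$\sim$ method with an up-to-context method for parallel composition, which supplies the ambient~$t$ on the fly; both formulations lead to the same case analysis.

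First I would enumerate, by inversion on the parallel-composition and communication rules together with \autoref{repeating-receivers-transitions}, the transitions of $t \parallel R \parallel R$. They fall into three groups: (i)~transitions inherited from a move $t \trans{\alpha} t'$ of the ambient process; (ii)~one of the two copies of~$R$ receiving a value~$x$, giving an $a \receiving x$-transition into $t \parallel (P \app x \parallel R) \parallel R$ or into $t \parallel R \parallel (P \app x \parallel R)$; and (iii)~a $\tau$-transition in which $t$ sends a value~$x$ on channel~$a$, that is $t \trans{a \sending x} t'$, to one of the two copies of~$R$, giving $t' \parallel (P \app x \parallel R) \parallel R$ or $t' \parallel R \parallel (P \app x \parallel R)$. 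There are no $\tau$-transitions beyond those in~(i) and~(iii), since producing one would require a copy of~$R$ to send. In case~(i), the process $t \parallel R$ matches by $t \parallel R \trans{\alpha} t' \parallel R$, and the two derivatives are again related by~$\mathcal{R}$. In cases~(ii) and~(iii), the process $t \parallel R$ matches by having its single copy of~$R$ receive~$x$, respectively by letting~$t$ communicate~$x$ into it, reaching $t \parallel (P \app x \parallel R)$ or $t' \parallel (P \app x \parallel R)$; by associativity and commutativity of parallel composition (\autoref{fundamental-bisimilarities}), the two derivatives are strongly bisimilar to $s \parallel R \parallel R$ and to $s \parallel R$ respectively, where $s$ is $t \parallel P \app x$, respectively $t' \parallel P \app x$, and the pair $(s \parallel R \parallel R,\; s \parallel R)$ lies once more in~$\mathcal{R}$. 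Hence every match stays inside the up-to-$\sim$ closure of~$\mathcal{R}$. The obligations coming from the inverse pairs are discharged symmetrically: a receive by~$R$, or a communication from~$t$ into~$R$, on the side carrying a single copy of~$R$ is matched by letting the first of the two copies perform it on the other side.

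I expect the main obstacle to be not any single case but the bookkeeping that makes the up-to-$\sim$ step close: one must see that every new copy of $P \app x$ produced by unfolding a repeating receiver can be folded back into the ambient process, which is exactly why the candidate relation has to be parametrised over an arbitrary~$t$, or equivalently why an up-to-context method is needed. In the Isabelle development this is carried out by invoking the \lstinline|coinduction| proof method with the appropriate composite ``up to'' technique and then discharging the resulting goals using the transition-rule inversions together with the congruence and structural laws already established for~$\sim$; higher-order abstract syntax keeps the treatment of the bound value~$x$ and of the continuation function~$P$ transparent throughout.
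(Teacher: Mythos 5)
Your proposal is correct and in essence takes the same route as the paper: the paper proves the law by bisimulation up to strong bisimilarity and context on the single pair $(a \repeatedreceive{x} P \app x \parallel a \repeatedreceive{x} P \app x,\; a \repeatedreceive{x} P \app x)$, using \autoref{repeating-receivers-transitions} to rule out sending (hence all communication cases) and then associativity of $\parallel$ plus removal of the common context $P \app x \parallel [\cdot]$ to return to the original pair --- exactly the second formulation you sketch. One small caveat: your claim that both formulations ``lead to the same case analysis'' is not quite accurate, since baking the ambient $t$ into the parametrised relation forces you to additionally handle communications between $t$ and a copy of the receiver (your case~(iii)), a case that simply never arises in the paper's one-point up-to-context proof because neither component of its pair can perform a sending action.
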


It is this idempotency law that we use as our example for demonstrating
our style of bisimulation proofs that are concise and human-friendly yet
machine-checked. However, before we turn to the Isabelle/HOL proof that
exhibits this style, we provide a semi-formal proof of this law.

\begin{proof}
    [Semi-formal proof of \autoref{repeating-receivers-idempotency}]

We prove the idempotency of repeating receivers by bisimulation up to
strong bisimilarity and context.
\begin{description}

\item[Forward simulation.]

Assume that $a \repeatedreceive{x} P \app x \parallel a
\repeatedreceive{x} P \app x \trans{\alpha} s$ for arbitrary but fixed
$\alpha$~and~$s$. Looking at
\autoref{thorn-calculus-no-mobility-transition-rules}, we can see that
only rules $\rulecommunicationltr$,~$\rulecommunicationrtl$,
$\ruleparallelleft$, and~$\ruleparallelright$ can in principle introduce
this transition, given that its source process is a parallel
composition.
\begin{description}

\item[Rules $\rulecommunicationltr$~and~$\rulecommunicationrtl$.]

Introducing the above transition using either of these rules requires a
$\sending$-transition from $a \repeatedreceive{x} P \app x$, which is
not possible according to \autoref{repeating-receivers-transitions}.

\item[Rule~$\ruleparallelleft$.]

For introducing the above transition using this rule, there must be a
process~$q$ such that the following statements hold:
\begin{gather}
\label{transition-from-left-repeating-receiver}
a \repeatedreceive{x} P \app x \trans{\alpha} q
\tag{i}
\\
\label{both-repeating-receivers-target-process}
s = q \parallel a \repeatedreceive{x} P \app x
\tag{ii}
\end{gather}
Based on \autoref{repeating-receivers-transitions},
statement~\statementref{transition-from-left-repeating-receiver} implies
that there is an~$x$ for which the following propositions are true:
\begin{gather}
\label{repeating-receiver-action}
\alpha = a \receiving x
\tag{iii}
\\
\label{left-repeating-receiver-target-process}
q = P \app x \parallel a \repeatedreceive{x} P \app x
\tag{iv}
\end{gather}
From
\statementref{both-repeating-receivers-target-process}~and~\statementref{left-repeating-receiver-target-process},
we can deduce the following:
\begin{gather}
\label{both-repeating-receivers-concrete-target-process}
s
=
(P \app x \parallel a \repeatedreceive{x} P \app x)
\parallel
a \repeatedreceive{x} P \app x
\tag{v}
\end{gather}
Because of
\statementref{repeating-receiver-action}~and~\statementref{both-repeating-receivers-concrete-target-process},
the transition we have started with has the following concrete shape:
\begin{equation}
\label{original-transition-explicitly}
a \repeatedreceive{x} P \app x \parallel a \repeatedreceive{x} P \app x
\trans{a \receiving x}
(P \app x \parallel a \repeatedreceive{x} P \app x)
\parallel
a \repeatedreceive{x} P \app x
\tag{vi}
\end{equation}
We simulate this transition with the transition $a \repeatedreceive{x} P
\app x \trans{a \receiving x} P \app x \parallel a \repeatedreceive{x} P
\app x$, whose existence follows from
\statementref{transition-from-left-repeating-receiver},~\statementref{repeating-receiver-action},
and~\statementref{left-repeating-receiver-target-process}. The target
processes of these two transitions are the respective source processes
up to strong bisimilarity and context. To see why, observe that the
first target process can be transformed into a bisimilar one as follows,
employing
\begingroup\renewcommand{\equationautorefname}{Bisimilarity}\autoref{parallel-associativity}\endgroup:
\begin{equation}
(P \app x \parallel a \repeatedreceive{x} P \app x)
\parallel
a \repeatedreceive{x} P \app x
\sim
P \app x
\parallel
(
    a \repeatedreceive{x} P \app x
    \parallel
    a \repeatedreceive{x} P \app x
)
\tag{vii}
\end{equation}
Removing the common context $P \app x \parallel [\cdot]$ from the result
of this transformation and the target process of the simulating
transition yields $a \repeatedreceive{x} P \app x \parallel a
\repeatedreceive{x} P \app x$ and $a \repeatedreceive{x} P \app x$.

\item[Rule~$\ruleparallelright$.]

This rule can be handled analogously to rule~$\ruleparallelleft$.

\end{description}

\item[Backward simulation.]

Assume that $a \repeatedreceive{x} P \app x \trans{\alpha} s$ for
arbitrary but fixed $\alpha$~and~$s$.
\autoref{repeating-receivers-transitions} tells us that there is an~$x$
such that $\alpha = a \receiving x$ and $s = P \app x \parallel a
\repeatedreceive{x} P \app x$, from which we can deduce that the
transition we have started with is concretely $a \repeatedreceive{x} P
\app x \trans{a \receiving x} P \app x \parallel a \repeatedreceive{x} P
\app x$. By applying rule~$\ruleparallelleft$, we can turn this
transition into
transition~\statementref{original-transition-explicitly}, which we use
as the simulating transition. The target processes of the original and
the simulating transition are the respective source processes up to
strong bisimilarity and context, for essentially the same reasons as in
the case of forward simulation of transitions generated by
rule~$\ruleparallelleft$.\qedhere

\end{description}

\end{proof}

\autoref{repeating-receivers-idempotency-formal-proof} presents the
formal proof of \autoref{repeating-receivers-idempotency}, which is
conducted in Isabelle/HOL. To not bother the reader with technicalities,
this presentation omits the subproofs that justify the atomic reasoning
steps. These subproofs are only short, straightforward applications of
lemmas and proof methods. The complete proof can be found in the
accompanying Isabelle code. Note that the formal proof, also as shown
here, refers to the full semantics and thus has to deal with mobility.
\begin{figure}

\begin{lstlisting}[language=isabelle,basicstyle=\small]
lemma repeated_receive_idempotency:
  shows $a \repeatedreceive{x} P \app x \parallel a \repeatedreceive{x} P \app x \sim a \repeatedreceive{x} P \app x$
proof (coinduction rule: up_to_rule [where $\mathcal{F} = \constantbisimilarity \chain \mutantlifting$])
  case (forward_simulation $\alpha$ $s$)
  then show $\schematic{case}$
  proof cases
    case (parallel_left_io $\eta$ $a'$ $n$ $x$ $q$)
    from $\fact{a \repeatedreceive{x} P \app x \trans{\IO{\eta}{a'}{n}{x}} q}$ obtain $t$ where $q = t \parallel (a \repeatedreceive{x} P \app x) \suffixadapted n$
      $\proofplaceholder$
    with $\fact{a \repeatedreceive{x} P \app x \trans{\IO{\eta}{a'}{n}{x}} q}$ have $a \repeatedreceive{x} P \app x \trans{\IO{\eta}{a'}{n}{x}} t \parallel (a \repeatedreceive{x} P \app x) \suffixadapted n$
      $\proofplaceholder$
    moreover have $(t \parallel r \suffixadapted n) \parallel r \suffixadapted n \sim t \parallel (r \parallel r) \suffixadapted n$ for $r$
      $\proofplaceholder$
    ultimately show $\schematic{thesis}$
      $\proofplaceholder$
  next
    case (parallel_right_io $\eta$ $a'$ $n$ $x$ $q$)
    from $\fact{a \repeatedreceive{x} P \app x \trans{\IO{\eta}{a'}{n}{x}} q}$ obtain $t$ where $q = t \parallel (a \repeatedreceive{x} P \app x) \suffixadapted n$
      $\proofplaceholder$
    with $\fact{a \repeatedreceive{x} P \app x \trans{\IO{\eta}{a'}{n}{x}} q}$ have $a \repeatedreceive{x} P \app x \trans{\IO{\eta}{a'}{n}{x}} t \parallel (a \repeatedreceive{x} P \app x) \suffixadapted n$
      $\proofplaceholder$
    moreover have $r \suffixadapted n \parallel (t \parallel r \suffixadapted n) \sim t \parallel (r \parallel r) \suffixadapted n$ for $r$
      $\proofplaceholder$
    ultimately show $\schematic{thesis}$
      $\proofplaceholder$
  qed  (blast elim: transition_$\;$from_repeated_receive)+
next
  case (backward_simulation $\alpha$ $s$)
  from $\fact{a \repeatedreceive{x} P \app x \trans{\alpha} s}$ obtain $n$ and $x$ where $\alpha = a \realreceiving{n} x$ and $s = \postreceive{n}{x}{P} \parallel (a \repeatedreceive{x} P \app x) \suffixadapted n$
    $\proofplaceholder$
  with $\fact{a \repeatedreceive{x} P \app x \trans{\alpha} s}$ have $a \repeatedreceive{x} P \app x \trans{a \realreceiving{n} x} \postreceive{n}{x}{P} \parallel (a \repeatedreceive{x} P \app x) \suffixadapted n$
    $\proofplaceholder$
  then have $a \repeatedreceive{x} P \app x \parallel a \repeatedreceive{x} P \app x \trans{a \realreceiving{n} x} (\postreceive{n}{x}{P} \parallel (a \repeatedreceive{x} P \app x) \suffixadapted n) \parallel (a \repeatedreceive{x} P \app x) \suffixadapted n$
    $\proofplaceholder$
  moreover have $(t \parallel r \suffixadapted n) \parallel r \suffixadapted n \sim t \parallel (r \parallel r) \suffixadapted n$ for $r$
    $\proofplaceholder$
  ultimately show $\schematic{case}$
    $\proofplaceholder$
qed respectful
\end{lstlisting}

\caption{Formal proof of idempotency of repeating receivers}

\label{repeating-receivers-idempotency-formal-proof}

\end{figure}

To aid understanding of the formal proof, let us point out a few things:
\begin{itemize}

\item

The initial proof method uses the term $\constantbisimilarity \chain
\mutantlifting$ to specify “up to strong bisimilarity and context” as
the “up to” method to use. To guarantee that the provided term specifies
an “up to” method that is sound, we have to prove that it fulfills a
certain condition. We do that by invoking the automated proof method
\lstinline|respectful| at the end of the proof.

\item

The part on forward simulation mentions actions of the form
$\IO{\eta}{a}{n}{x}$. Such actions can be sending or receiving actions.
For reasons having to do with mobility, there are separate versions of
$\ruleparallelleft$~and~$\ruleparallelright$ for sending and receiving
actions on the one hand and the internal-transfer action on the other.
The cases \lstinline|parallel_left_io| and \lstinline|parallel_right_io|
are only about sending and receiving, not about internal transfer.

\item

There are no explicit proof steps for showing that the original
transition of a forward simulation cannot be introduced using
$\rulecommunicationltr$~or~$\rulecommunicationrtl$. We have that
automatically shown by the proof method
\lstinline|(blast elim: transition_$\;$from_repeated_receive)+| at the
end of the forward simulation part. This proof method additionally shows
that said transition cannot be introduced using the internal-transfer
versions of $\ruleparallelleft$~and~$\ruleparallelright$ mentioned in
the previous item.

\item

Mobility makes it possible to receive previously unknown channels from
the environment. To deal with this possibility, some tweaks are
necessary, namely adding $\nobinarg \suffixadapted n$ in a few places,
switching to a more powerful kind of receiving action,
$\realreceiving{n}$, and replacing $P \app x$ by
$\postreceive{n}{x}{P}$. A deeper discussion of these tweaks would be
outside the scope of this paper.

\end{itemize}

As can be seen, the formal proof is quite similar to the semi-formal
one, which we consider a strength of our work. It is generally more
compact, but the handling of forward simulation of transitions generated
by rule~$\ruleparallelright$ had to be spelled out, where the
semi-formal proof could just state that it is analogous to what was done
for rule~$\ruleparallelleft$.

\section{Bisimulation Proofs for Humans and Machines}

The semi-formal proof of \autoref{repeating-receivers-idempotency} is
geared toward human readers, and its style has been chosen accordingly.
The formal proof, by following the semi-formal proof rather closely,
retains this human-friendly style to a large extent but is
machine-checked at the same time. This achievement rests on the
combination of several tools:
\begin{description}

\item[The Isabelle/Isar proof language.]

Isabelle/Isar~\cite{wenzel:2022} is a structured, declarative proof
language that incorporates elements of mathematical prose. With these
characteristics, Isar proofs differ notably from proof terms as well as
tactics-based proof scripts, with the result of being better
understandable by humans. Despite its human-friendliness, Isar comes
with a precise semantics, and the correctness of Isar proofs can be
checked using the Isabelle proof assistant.

The use of Isar is crucial for having the formal proof largely reflect
the semi-formal proof. The block structure achieved by employing
\lstinline|proof|, \lstinline|case|, \lstinline|next|, and
\lstinline|qed| resembles the overall structure of the semi-formal
proof, in particular the nesting of subproofs and the distinction
between forward and backward simulation as well as between different
introduction rules. At the bottom layer, intermediate facts are
explicitly stated and later accessed using Isar’s flexible means for
fact referencing. Other, minor, features of Isar serve to further narrow
the gap between the formal and the semi-formal proof.

\item[A formalized algebra of “up to” methods.]

Both the semi-formal and the formal proof have to cope with the fact
that transitions from $a \repeatedreceive{x} P \app x \parallel a
\repeatedreceive{x} P \app x$ and $a \repeatedreceive{x} P \app x$ do
not result in these processes again but only in processes that can be
derived from them by adding a common context and performing a
bisimilarity transformation. However, this is not a problem, because
employing the “up to strong bisimilarity and context” method bridges
this gap.

A bisimulation proof that does not employ “up to” methods would be much
more complex. Such a proof would have to show that bisimulation is also
possible for the above-mentioned target processes and recursively for
any processes that arise from bisimulation of previously considered
processes. In the end, instead of dealing with repeating receivers only,
the proof would have to deal with all processes of the form $u_{1}
\parallel \ldots \parallel u_{n} \parallel a \repeatedreceive{x} P \app
x$. Since the processes to be proved bisimilar contain a total of three
repeating receivers, this would result in an enormous amount of
boilerplate that would obscure the key arguments of the proof.
Furthermore, such a proof would be hard to develop in the first place.

In order to prevent such issues, we have implemented an algebra of “up
to” methods that are guaranteed to be sound, using Isabelle/HOL. This
implementation enables developers of formal bisimulation proofs to
construct custom “up to” methods that fit the specific bisimilarity
statements to prove. In the proof of idempotency of repeating receivers,
we use the “up to” method $\constantbisimilarity \chain \mutantlifting$.
This method is built from the primitive methods
$\mutantlifting$~and~$\constantbisimilarity$. $\mutantlifting$~requires
target processes to be source processes up to context\footnote{Actually
up to mutation, which is more general than up to context.}, and
$\constantbisimilarity$ requires target processes to be strongly
bisimilar, independently of source processes. The operator~$\chain$
serves to combine the two. Note that $\constantbisimilarity \chain
\mutantlifting$ allows only the first process to deviate by strong
bisimilarity, which is the one for which we need this possibility; full
“up to strong bisimilarity and context” is denoted by
$\constantbisimilarity \chain \mutantlifting \chain
\constantbisimilarity$.

\item[The \textit{coinduction} proof method.]

Isabelle’s \lstinline|coinduction| proof method~\cite{blanchette:2014}
makes it possible to conduct coinductive proofs using the
\lstinline|proof|–\lstinline|case|–\lstinline|next|–\lstinline|qed|
style exemplified by our formal proof of
\autoref{repeating-receivers-idempotency}. Isabelle/HOL supports
coinductive definitions of data types and predicates, and in its default
mode the \lstinline|coinduction| method enables reasoning along the
coinductive structure of the data types and predicates so defined. In
the case of bisimilarity, which is a coinductively defined predicate,
this leads to plain bisimulation proofs, those that do not employ “up
to” methods.

However, the \lstinline|coinduction| method can also work with
user-provided coinduction rules, which can be lemmas derived from the
coinduction rules induced by coinductive data type and predicate
definitions. This allows us to use the \lstinline|coinduction| method
for bisimulation proofs that apply “up to” methods. For employing a
concrete “up to” method, we can instantiate the generic lemma
\lstinline|up_to_rule| for this “up to” method and provide the resulting
fact as the coinduction rule to use to the \lstinline|coinduction| proof
method.

A feature of the \lstinline|coinduction| method that helps making proofs
concise is the automatic derivation of bisimulation relations. As
indicated in the previous item, bisimulation relations often have to
cover more than just the processes to be proved bisimilar if “up to”
methods are not used, since target processes typically deviate from
source processes. However, in most bisimulation proofs that do use “up
to” methods, including our proof of idempotency of repeating receivers,
this issue does not arise, and the bisimulation relation of choice is
the one that just covers the processes whose bisimilarity is to be
shown. The \lstinline|coinduction| method derives this relation from the
proof goal and automatically shows the trivial statement that the
processes to be proved bisimilar are in this relation.\footnote{This is
what distinguishes it from the \lstinline|coinduct|
method~\cite[Subsection~6.5.2]{wenzel:2022}, which requires the user to
specify the bisimulation relation and prove that the processes to be
proved bisimilar are in this relation.} As a result, the proof can
concentrate on the actual bisimulation.

\item[Higher-order abstract syntax.]

Higher-order abstract syntax (HOAS)~\cite{pfenning:1988} is a technique
of embedding an object language in a higher-order host language where
name binding in the object language is expressed using functions of the
host language. When not using HOAS, formal proofs that involve binding
constructs tend to be littered with boilerplate for dealing with issues
like name capturing and freshness conditions. By employing HOAS, this
problem can be prevented. As we have seen in
\autoref{repeating-receivers-idempotency-proof}, also our use of HOAS
necessitates additional handling of technicalities as soon as mobility
is taken into account. However, the corresponding amount of extra code
tends to be low compared to the amount of extra code necessary with
non-HOAS approaches, including those that make use of nominal logic,
which is generally boilerplate-reducing.\footnote{For example, the
complete implementation of “up to” methods for the Þ-calculus is less
than half the size of the implementation of “up to” methods for
$\psi$-calculi~\cite{pohjola:2016}, which also uses Isabelle/HOL and
employs the Nominal Isabelle framework~\cite{urban:2008}. This
considerable difference in code size may also be due to $\psi$-calculi
explicitly handling computation and conditional execution, but the
reason that the Þ-calculus does not have to explicitly deal with these
features is also because of its use of HOAS.}

The use of HOAS makes it possible to construct exotic terms, that is,
terms where functions representing binding yield subterms whose
structure depends on the arguments of these functions. In the case of
receivers, we consider this a feature, as it allows us to handle
computation and conditional execution within the meta-language. However,
in the case of restrictions, exotic terms may become an issue when
treating mobility naïvely; for example, Bisimilarities
\ref{parallel-compatibility-strong}~and~\ref{parallel-compatibility-weak}
may not hold anymore. The typical solution to such problems is to
restrict the calculus in question to terms that are not exotic. The
solution of the Þ-calculus, however, is different: exotic terms can be
constructed freely, but the transition system semantics does not allow
transitions from exotic restrictions. Only the full transition system,
which can be found in the accompanying Isabelle code, has this feature
of preventing such transitions. To achieve it, the transition system has
to maintain lists of channels introduced by restrictions, and this
results in the need for the additional tweaks present in the formal
proof.

\end{description}

\section{Related Work}

Various domain-specific languages for modeling communication networks
and reasoning about them are discussed in the literature. One of them is
NetKAT~\cite{anderson:2014}, a network programming language based on
Kleene algebra with tests (KAT) that features a complete deductive
system and a PSPACE decision procedure. Unlike our communication
language, NetKAT lacks restriction and, being a sequential language,
also parallel composition. On the other hand, it allows for packet
inspection and modification. Another example of a network communication
DSL is Nettle~\cite{voellmy:2011}, a language for programming OpenFlow
networks that is embedded in Haskell and based on the principles of
functional reactive programming. Like with NetKAT, packet inspection and
modification is also possible with Nettle.

Process calculi for describing and verifying communication networks have
been an active area of research. For example, the
$\omega$-calculus~\cite{singh:2010} is a process calculus devised to
formally reason about mobile ad-hoc networks (MANETs). It is a
conservative extension of the $\pi$-calculus that has built-in support
for unicast and broadcast communication as well as location-based
scoping. We have designed the Þ-calculus as a general-purpose process
calculus and have thus avoided the inclusion of application-specific
features like support for broadcast communication. That said, such
features can be implemented on top of the Þ-calculus, as the definition
of the communication language as an embedded DSL and Examples
\ref{reliable-anycast}~and~\ref{unreliable-broadcast} show.

Several well-known process calculi have been formalized by Bengtson and
colleagues in Isabelle/HOL, in particular the
$\pi$-calculus~\cite{bengtson:2009} and
$\psi$-calculi~\cite{bengtson:2016}. Unlike our formalization of the
Þ-calculus, those formalizations do not use HOAS but Nominal
Isabelle~\cite{urban:2008} for dealing with name binding. It appears
that this makes them more complex than the Þ-calculus formalization,
although one has to consider that they use version~1 of Nominal
Isabelle, not the improved version~2. Furthermore, the formalizations by
Bengtson et~al.\ suffer from considerable repetition, for example in
their handling of strong and weak bisimilarity. The Þ-calculus
formalization, on the other hand, makes more use of abstractions and
achieves more code reuse this way. Another, albeit minor, advantage of
the Þ-calculus formalization is its use of the \lstinline|coinduction|
proof method. The above-mentioned formalizations of the $\pi$-calculus
and $\psi$-calculi use the less powerful \lstinline|coinduct| method,
resulting in more boilerplate code. Finally, “up to” methods help to
avoid repetitive, technical proof code on a large scale, which we
leverage in the Þ-calculus formalization and the developments built on
it, using our formalized algebra of “up to” methods. The formalizations
by Bengtson et~al.\ also make use of “up to” methods, but the authors
have only proved the soundness of a few specific methods. That said,
Åman Pohjola and Parrow have developed a framework for “up to” methods
for $\psi$-calculi~\cite{pohjola:2016}.

\section{Conclusion}

We have presented a transition system semantics for the Þ-calculus,
which is a general-purpose process calculus embedded in Isabelle/HOL,
and derived from it a transition system semantics for a custom network
communication language, which is embedded in the Þ-calculus. Building on
this foundation and based on an example related to network
communication, we have shown a way of conducting bisimulation proofs
such that they become concise and human-friendly, while being
machine-checked at the same time. Our proving style stems from combining
the Isabelle/Isar proof language, an algebra of “up to” methods
formalized in Isabelle/HOL, Isabelle’s \lstinline|coinduction| proof
method, and higher-order abstract syntax.

\section{Ongoing and Future Work}

As mentioned in \autoref{repeating-receivers-idempotency-proof}, we have
proved most of the lower-level bisimilarity statements on which our
broadcast equivalence proof~\cite{jeltsch:2022} relies. At the moment,
we are completing the last proofs of such statements.

In accordance with our research program mentioned in
\autoref{introduction}, we plan to verify further design refinement
steps that the consensus protocols of the Ouroboros family have
undergone. The refinement step we want to tackle next is the replacement
of whole-chain distribution with a protocol for updating chains
incrementally. Furthermore, we want to add some missing bits, in
particular documentation, to the formalization of the Þ-calculus and the
algebra of “up to” methods and submit both formalizations to Isabelle’s
Archive of Formal Proofs (AFP)\footnote{See
\url{https://www.isa-afp.org/}.}.

\section*{Acknowledgements}

This work was funded by Input Output. We are thankful to Input Output
for giving us the opportunity to work on numerous interesting topics,
including the one described in this paper. Furthermore, we want to thank
the anonymous reviewers and James Chapman for their various suggestions
for improvement of this paper.

\bibliography{paper}

\end{document}